\newtheorem{theorem}{Theorem}
\newtheorem{lemma}[theorem]{Lemma}
\newtheorem{definition}{\normalfont Definition}
\newtheorem{convention}{\normalfont Convention}
\def\Mg{$(M, g)$}
\def\mn{\mu \nu }
\def\ls{\mathcal{L}^{\pm }_{l[\sigma ]}}
\begin{document}
\def\Mg{$(M, g)$}
\def\mn{\mu \nu }
\def\ls{\mathcal{L}^{+}_{l[\sigma ]}}
\def\define#1{(definition: {\it #1})}
\preprint{APS/123-QED}

\title{Singularities from Hyperentropic regions using the Quantum Expansion}%

\author{Vaibhav Kalvakota}%
 \email{vaibhavkalvakota@icloud.com}
\affiliation{%
Turito Institute, 500081, Hyderabad, India}%
 
\date{\today}% It is always \today, today,
             %  but any date may be explicitly specified

\begin{abstract}
A recent paper \cite{Bousso:2022cun} put forward a theorem showing that hyperentropic surface would result in incomplete null generators for a null hypersurface emanating from the surface provided it satisfies the null curvature condition and the spacetime is globally hyperbolic. In this paper, we will put forward a version of this theorem using the quantum expansion in place of the classical expansion, and we will discuss the quantum focusing conjecture in this regard.
\end{abstract}

%\keywords{Suggested keywords}%Use showkeys class option if keyword
                              %display desired
\maketitle

We propose a quantum version of the theorem \cite{Bousso:2022cun}, which can be stated as:
\begin{theorem}\label{BoussoArvin}
Let $(M, g)$ be a globally hyperbolic spacetime with a codimension $2$ surface $I$ with a compact boundary $\partial I$. If $S(I)>\frac{A(\partial I)}{4G\hbar }$ and if the expansion is negative along the future ingoing null congruence that is orthogonal to $\partial I$ under the satisfaction of the null curvature condition, then at least one null generator is incomplete.
\end{theorem}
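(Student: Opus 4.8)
The plan is to combine a Raychaudhuri focusing argument with the covariant entropy bound, so that the hyperentropic hypothesis becomes the precise obstruction to null completeness. First I would fix notation: let $k^{\mu}$ be the future-directed ingoing null field generating the congruence orthogonal to $\partial I$, affinely parametrised by $\lambda$ with $\lambda=0$ on $\partial I$, and let $\theta=\nabla_{\mu}k^{\mu}$ be its expansion. By hypothesis $\theta<0$ on $\partial I$, and since $\partial I$ is compact, $\theta$ attains a maximum $\theta_{0}<0$ there, so the initial expansion is bounded away from zero uniformly over the generators.

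Next I would apply the Raychaudhuri equation for this hypersurface-orthogonal (hence twist-free) congruence,
\begin{equation}
\frac{d\theta}{d\lambda}=-\frac{1}{2}\theta^{2}-\sigma_{\mu\nu}\sigma^{\mu\nu}-R_{\mu\nu}k^{\mu}k^{\nu},
\end{equation}
and use the null curvature condition $R_{\mu\nu}k^{\mu}k^{\nu}\ge 0$ together with $\sigma_{\mu\nu}\sigma^{\mu\nu}\ge 0$ to deduce $d\theta/d\lambda\le-\tfrac12\theta^{2}\le 0$. Two consequences follow: the expansion is non-increasing, so $\theta\le\theta_{0}<0$ persists and the congruence is a genuine light-sheet $L$ in Bousso's sense; and, integrating the inequality from $\theta(0)\le\theta_{0}$, any generator that remains complete satisfies $\theta\to-\infty$ within affine parameter $\lambda_{\ast}\le 2/|\theta_{0}|$. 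Compactness of $\partial I$ makes this bound uniform over all generators, so under the assumption that every generator is future-complete the light-sheet terminates at caustics within a bounded affine length and encloses the region $I$.

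The decisive step is to feed this into the covariant entropy bound, whose own derivation (Flanagan--Marolf--Wald) rests on exactly the null curvature condition together with a local entropy condition, and whose applicability is made clean here by global hyperbolicity. The bound states that the entropy crossing such a terminated light-sheet from $\partial I$ is at most $A(\partial I)/(4G\hbar)$; since the entropy of $I$ is captured by $L$, this forces $S(I)\le A(\partial I)/(4G\hbar)$, in direct contradiction with the hyperentropic hypothesis $S(I)>A(\partial I)/(4G\hbar)$. The only way to avoid the contradiction is to abandon the assumption that every generator is future-complete, whence at least one null generator orthogonal to $\partial I$ is incomplete, which is the assertion.

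I expect the main obstacle to be this last step, namely matching the purely geometric output of focusing to the hypotheses under which the entropy bound holds. One must argue carefully that $S(I)$ is genuinely bounded by the flux through $L$ --- controlling the behaviour at the caustics, fixing the homology region bounded by $\partial I$, and ruling out entropy that escapes the light-sheet --- and that global hyperbolicity both legitimises the entropy bound on $L$ and guarantees that the resulting failure of completeness is genuine geodesic incompleteness rather than an artefact of the affine parametrisation. That reconciliation is the part of the argument I would develop in greatest detail.
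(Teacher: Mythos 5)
Your overall strategy coincides with the paper's: assume every generator is future-complete, use the Raychaudhuri equation together with the null curvature condition and the strictly negative initial expansion on the compact $\partial I$ to force each generator off the light-sheet within a uniformly bounded affine parameter, conclude that $L$ closes, and then contradict the hyperentropic hypothesis via the Bousso bound. On the focusing side you are in fact more quantitative than the paper, which only asserts that generators ``leave $L$ after a fixed rescaled affine length''; your explicit uniform bound $\lambda_{\ast}\le 2/|\theta_{0}|$, with uniformity supplied by compactness of $\partial I$, is a cleaner version of that step.

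The genuine gap is precisely the step you postponed to the end: why $S(I)$ is bounded by the entropy crossing $L$. The paper closes this causally rather than by any flux or caustic analysis, and that causal identification is the one substantive idea your proposal lacks. The paper constructs $L$ from the outset as $\dot{\mathcal{I}}^{+}(X)-X$, where $X$ is the complement of $I$ in the Cauchy slice $\Sigma$ (its Lemmas 5--7), so $L$ is the future boundary of the domain of dependence of $I$, generated by the null geodesics orthogonal to $\partial I$, with $\partial L=\partial I$. Once $L$ is closed, $\partial L=\partial I$ yields $D(L)=D(I)$ (Proposition 10 of Bousso--Moghaddam), and equality of the domains of dependence is what licenses $S(L)=S(I)$: every causal curve registering the entropy on $I$ also crosses $L$, so the Bousso bound on the closed light-sheet reads $S(I)\le A(\partial I)/(4G\hbar)$, the desired contradiction. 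A light-sheet produced purely by your Raychaudhuri analysis is not a priori the causal boundary of the region whose entropy is $S(I)$, so without some such domain-of-dependence argument your final inequality does not follow from the Flanagan--Marolf--Wald-type bound alone; the issues you list (caustic behaviour, homology, escaping entropy) are dissolved rather than estimated by this construction, with the paper separately assuming that no caustics cut the sheet before the focal points your bound produces.
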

The Bousso bound \cite{Bousso:1999xyz, Flanagan:1999jp} is the statement that for a surface with at least one of the four principle null surface-orthogonal congruences having a negative expansion, the entropy bound on the surface is given by the area of the boundary of the surface (when there exist more than one light-sheet, we simply add in those factors -- the maximal form of the inequality is with $S\leq A$):
\begin{equation}\label{eq:Boussobound}
    S(\sigma )\leq \frac{A(\partial \sigma )}{4G\hbar }.
\end{equation}
This is interpreted as the entropy that is bound through the light-sheet, and since the light-sheet is compact (terminating either at a focal point or a caustic) \cite{Bousso:1999xyz}, the boundary of the light-sheet is the boundary of the surface itself. 
In the paper \cite{Bousso:2022cun}, Bousso and Moghaddam introduced a classical singularity theorem that states that if a surface has an entropy greater than what the Bousso bound allows, i.e. if 
\begin{equation}
    S(I)>\frac{A(\partial I)}{4G\hbar },
\end{equation}
then under theorem \ref{BoussoArvin}, the region (called a \textit{hyperentropic} region) must necessarily have at least one incomplete null generator provided the surface is at least future marginally trapped. This is an interesting result particularly since this can be interpreted as a singularity arising because of too much information in a given surface. A possible converse of this theorem will be discussed in an upcoming paper, which would hint at the existence of a hyperentropic region in a spacetime with a null geodesic incompleteness. As we will state and prove in this paper, the result in theorem \ref{BoussoArvin} has implications from the generalized second law, which can be stated as the following variation condition
\begin{equation}
    \delta S_{gen}(\sigma , \Sigma )>0,
\end{equation}
where $S_{gen}$ is the generalized entropy for a codimension $2$ surface $\sigma $ in a Cauchy slice $\Sigma $. Throughout this paper, by a surface $I$ we necessarily imply a codimension $2$ surface that has at least one congruence with a negative expansion, while by $\sigma $ we will imply a codimension $2$ surface without such restriction. 

\begin{center}

\tikzset{every picture/.style={line width=0.75pt}} %set default line width to 0.75pt        

\begin{tikzpicture}[x=0.75pt,y=0.75pt,yscale=-1,xscale=1]
%uncomment if require: \path (0,300); %set diagram left start at 0, and has height of 300

%Straight Lines [id:da33590700359248205] 
\draw    (247.87,191.09) -- (438.87,191.1) ;
%Straight Lines [id:da4191707241936149] 
\draw [color={rgb, 255:red, 74; green, 144; blue, 226 }  ,draw opacity=1 ]   (247.87,191.09) -- (319.42,92.58) ;
%Shape: Wave [id:dp4969120503455575] 
\draw  [color={rgb, 255:red, 255; green, 0; blue, 0 }  ,draw opacity=1 ] (319.42,93.23) .. controls (320.31,96.08) and (321.17,98.79) .. (322.17,98.79) .. controls (323.16,98.79) and (324.02,96.08) .. (324.92,93.23) .. controls (325.81,90.38) and (326.67,87.67) .. (327.67,87.67) .. controls (328.66,87.67) and (329.52,90.38) .. (330.42,93.23) .. controls (331.31,96.08) and (332.17,98.79) .. (333.17,98.79) .. controls (334.16,98.79) and (335.02,96.08) .. (335.92,93.23) .. controls (336.81,90.38) and (337.67,87.67) .. (338.67,87.67) .. controls (339.66,87.67) and (340.52,90.38) .. (341.42,93.23) .. controls (342.31,96.08) and (343.17,98.79) .. (344.17,98.79) .. controls (345.16,98.79) and (346.02,96.08) .. (346.92,93.23) .. controls (347.81,90.38) and (348.67,87.67) .. (349.67,87.67) .. controls (350.66,87.67) and (351.52,90.38) .. (352.42,93.23) .. controls (353.31,96.08) and (354.17,98.79) .. (355.17,98.79) .. controls (356.16,98.79) and (357.02,96.08) .. (357.92,93.23) .. controls (358.81,90.38) and (359.67,87.67) .. (360.67,87.67) .. controls (361.66,87.67) and (362.52,90.38) .. (363.42,93.23) .. controls (364.31,96.08) and (365.17,98.79) .. (366.17,98.79) .. controls (367.16,98.79) and (368.02,96.08) .. (368.92,93.23) .. controls (369.81,90.38) and (370.67,87.67) .. (371.67,87.67) .. controls (372.66,87.67) and (373.52,90.38) .. (374.42,93.23) .. controls (375.31,96.08) and (376.17,98.79) .. (377.17,98.79) .. controls (378.16,98.79) and (379.02,96.08) .. (379.92,93.23) .. controls (380.81,90.38) and (381.67,87.67) .. (382.67,87.67) .. controls (383.66,87.67) and (384.52,90.38) .. (385.42,93.23) .. controls (386.31,96.08) and (387.17,98.79) .. (388.17,98.79) .. controls (389.16,98.79) and (390.02,96.08) .. (390.92,93.23) .. controls (391.81,90.38) and (392.67,87.67) .. (393.67,87.67) .. controls (394.66,87.67) and (395.52,90.38) .. (396.42,93.23) .. controls (397.31,96.08) and (398.17,98.79) .. (399.17,98.79) .. controls (400.16,98.79) and (401.02,96.08) .. (401.92,93.23) .. controls (402.81,90.38) and (403.67,87.67) .. (404.67,87.67) .. controls (405.66,87.67) and (406.52,90.38) .. (407.42,93.23) .. controls (408.31,96.08) and (409.17,98.79) .. (410.17,98.79) .. controls (411.16,98.79) and (412.02,96.08) .. (412.92,93.23) .. controls (413.81,90.38) and (414.67,87.67) .. (415.67,87.67) .. controls (416.66,87.67) and (417.52,90.38) .. (418.42,93.23) .. controls (419.31,96.08) and (420.17,98.79) .. (421.17,98.79) .. controls (422.16,98.79) and (423.02,96.08) .. (423.92,93.23) .. controls (424.81,90.38) and (425.67,87.67) .. (426.67,87.67) .. controls (427.66,87.67) and (428.52,90.38) .. (429.42,93.23) .. controls (430.31,96.08) and (431.17,98.79) .. (432.17,98.79) .. controls (433.16,98.79) and (434.02,96.08) .. (434.92,93.23) .. controls (435.81,90.38) and (436.67,87.67) .. (437.67,87.67) .. controls (438.62,87.67) and (439.44,90.13) .. (440.29,92.83) ;
%Straight Lines [id:da7932513573057052] 
\draw    (440.29,90.37) -- (438.87,191.42) ;
%Curve Lines [id:da6354632465154313] 
\draw    (289.84,209.89) .. controls (338.03,158.16) and (320.1,123.36) .. (291.39,72.19) ;
\draw [shift={(290.52,70.63)}, rotate = 60.68] [color={rgb, 255:red, 0; green, 0; blue, 0 }  ][line width=0.75]    (10.93,-3.29) .. controls (6.95,-1.4) and (3.31,-0.3) .. (0,0) .. controls (3.31,0.3) and (6.95,1.4) .. (10.93,3.29)   ;
%Curve Lines [id:da44667751587864246] 
\draw    (337.84,211.22) .. controls (355.9,184.97) and (360.38,165.65) .. (362.52,142.63) .. controls (364.62,120.08) and (364.29,94.36) .. (349.45,72) ;
\draw [shift={(348.52,70.63)}, rotate = 55.2] [color={rgb, 255:red, 0; green, 0; blue, 0 }  ][line width=0.75]    (10.93,-3.29) .. controls (6.95,-1.4) and (3.31,-0.3) .. (0,0) .. controls (3.31,0.3) and (6.95,1.4) .. (10.93,3.29)   ;
%Straight Lines [id:da47959056091897345] 
\draw    (407.84,209.89) -- (407.52,74.63) ;
\draw [shift={(407.52,72.63)}, rotate = 89.87] [color={rgb, 255:red, 0; green, 0; blue, 0 }  ][line width=0.75]    (10.93,-3.29) .. controls (6.95,-1.4) and (3.31,-0.3) .. (0,0) .. controls (3.31,0.3) and (6.95,1.4) .. (10.93,3.29)   ;
%Shape: Circle [id:dp4717146654850801] 
\draw  [fill={rgb, 255:red, 0; green, 0; blue, 0 }  ,fill opacity=1 ] (247.87,191.09) .. controls (247.87,189.73) and (248.98,188.62) .. (250.34,188.62) .. controls (251.71,188.62) and (252.82,189.73) .. (252.82,191.09) .. controls (252.82,192.46) and (251.71,193.56) .. (250.34,193.56) .. controls (248.98,193.56) and (247.87,192.46) .. (247.87,191.09) -- cycle ;

% Text Node
\draw (231,192.4) node [anchor=north west][inner sep=0.75pt]  [font=\small]  {$\partial I$};
% Text Node
\draw (368,191.4) node [anchor=north west][inner sep=0.75pt]    {$I$};
% Text Node
\draw (268,118.4) node [anchor=north west][inner sep=0.75pt]    {$L$};

\end{tikzpicture}
\end{center}

The theorem we wish to prove in relation to theorem \ref{BoussoArvin} is as follows, where $\Uptheta $ is the quantum expansion:
\begin{theorem}\label{qba}
Let a surface $I$ have a boundary $\partial I$ that is a compact codimension $2$ submanifold with splitting, and let $\Uptheta ^{+}_{l}<0$. If the surface is hyperentropic, there exists at least one null incomplete generator.
\end{theorem}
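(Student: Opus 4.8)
The plan is to reproduce the logical skeleton of the proof of Theorem~\ref{BoussoArvin}, trading each classical input for its quantum refinement: the classical expansion for the quantum expansion $\Uptheta^{+}_{l}$, the null curvature condition for the quantum focusing conjecture, and the Bousso bound \eqref{eq:Boussobound} for its generalized version phrased in terms of $S_{gen}$. First I would fix the future ingoing null congruence $L$ orthogonal to $\partial I$ with affine parameter $\lambda$ and cut profile $V(y)$, and recall that the quantum expansion is the functional derivative of the generalized entropy,
\begin{equation}
    \Uptheta^{+}_{l}(y)=\frac{4G\hbar }{\sqrt{h(y)}}\frac{\delta S_{gen}}{\delta V(y)},\qquad S_{gen}=\frac{A}{4G\hbar }+S_{out},
\end{equation}
so that the hypothesis $\Uptheta^{+}_{l}<0$ says precisely that $S_{gen}$ decreases as the cut is pushed along $L$. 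Here the ``splitting'' assumption is what guarantees a clean interior/exterior decomposition of the Cauchy data, and hence makes $S_{out}$, $S_{gen}$ and $\Uptheta^{+}_{l}$ well defined in the first place.

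Next comes the focusing step. The quantum focusing conjecture, $\delta \Uptheta^{+}_{l}/\delta V(y)\leq 0$, ensures that $\Uptheta^{+}_{l}$ can never increase along $L$; starting negative it stays negative, so $L$ remains a bona fide quantum light-sheet (non-positive quantum expansion) all the way along. This is exactly the role that the null curvature condition plus Raychaudhuri focusing played classically, with the advantage that it survives even when the matter violates the null curvature condition. Writing $\Uptheta^{+}_{l}=\theta^{+}_{l}+\tfrac{4G\hbar }{\sqrt{h}}\,\delta S_{out}/\delta V$, I would argue that the monotone, strictly negative $\Uptheta^{+}_{l}$ forces the cut towards a quantum focal point within finite affine parameter --- the analogue of the caustic at which a classical light-sheet terminates.

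With the quantum light-sheet under control I would finish by contradiction. Suppose every generator of $L$ is complete. Then $L$ terminates only at the quantum focal cut, sweeps out the whole of $I$, and the generalized covariant entropy bound applies along it; since $S_{gen}$ has decreased monotonically from its value on $\partial I$, this yields $S(I)\leq A(\partial I)/4G\hbar $, where the generalized second law $\delta S_{gen}>0$ is what ties the monotone behaviour of $S_{gen}$ along $L$ to the physical entropy of $I$. This contradicts the hyperentropic hypothesis $S(I)>A(\partial I)/4G\hbar $, so the assumption fails and at least one generator of $L$ cannot reach the focal cut; that generator is incomplete.

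The hard part will be the focusing step. The quantum focusing conjecture only delivers monotonicity of $\Uptheta^{+}_{l}$, not the quadratic self-focusing term $-\theta ^{2}/(D-2)$ that produces a finite-parameter blow-up in the classical Raychaudhuri argument, so extracting an actual quantum caustic from $\Uptheta^{+}_{l}<0$ is delicate; one either controls the entropy-gradient piece through the decomposition above so that the classical focusing term is not cancelled, or else recasts ``incompleteness'' directly as the failure of the quantum light-sheet to terminate. A secondary but essential point is the bookkeeping that the entropy $S(I)$ in the hyperentropic hypothesis is the very quantity bounded by the generalized covariant entropy bound, so that the final inequality is a genuine contradiction rather than a comparison of two inequivalent entropies.
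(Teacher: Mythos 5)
Your proposal follows essentially the same contradiction skeleton as the paper's proof: assume all generators of the ingoing light-sheet $L$ are complete, use $\Uptheta^{+}_{l}<0$ (equivalently, decreasing $S_{gen}$) to force the generators to leave $L$ at finite affine parameter so that $L$ closes, and then contradict the hyperentropic hypothesis via the covariant entropy bound. The step you flag as ``the hard part'' --- extracting finite-affine termination from mere negativity/monotonicity of $\Uptheta$ --- is exactly the step the paper also asserts without justification (``the generator segments on $L$ must terminate at some finite affine value''), so on that point your attempt is no weaker than the paper's own argument, and more candid about the gap.

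The genuine divergence, and the one concrete gap, is in the final bookkeeping. You tie the light-sheet bound to $S(I)$ by saying $L$ ``sweeps out the whole of $I$'' and by invoking the generalized second law as ``what ties the monotone behaviour of $S_{gen}$ along $L$ to the physical entropy of $I$.'' The paper does neither: it uses the causal identity $\partial L=\partial I \Rightarrow D(L)=D(I)$ (Proposition 10 of Bousso--Moghaddam) to conclude $S(L)=S(I)$, and only then does the hyperentropic hypothesis contradict the bound. This is the correct mechanism --- the entropy registered on $L$ equals $S(I)$ because the two surfaces share a domain of dependence, not because of the GSL, and not because $L$ geometrically covers $I$ (it emanates from $\partial I$ and never meets the interior of $I$). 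Moreover, the GSL cannot play the role you assign it: the paper itself points out that a decreasing $S_{gen}$ along $L$ would \emph{violate} the GSL if $L$ were a horizon, which is precisely why the hypothesis is restricted to the future ingoing congruence with no statement about the outgoing one. Replacing your GSL appeal by the domain-of-dependence identity $D(L)=D(I)$, $S(L)=S(I)$ closes this gap and brings your argument into line with the paper's.
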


We will now discuss some definitions and propositions that will help us in reviewing this theorem and discuss the implications of the proof \cite{Bousso:2022cun}.

\begin{convention}
A spacetime $(M, g)$ is said to be globally hyperbolic if $J^{+}(p)\cap J^{-}(q)$ is compact for $p, q\in M$, and if there exists a Cauchy surface $\Sigma $ in $(M, g)$. We will assume throughout this paper that $(M ,g)$ is globally hyperbolic. To indicate the boundary of a surface $I$ in $\Sigma $, we use $\partial I$, and to indicate the boundary of a surface in $(M, g)$, we will use $\Dot{I}$ \cite{Hawking:1973uf}. 
\end{convention}
\begin{convention}
By a boundary $\partial \sigma $ we mean a compact Cauchy splitting codimension $2$ submanifold for a surface $\sigma $ that defines an interior and exterior for the Cauchy slice $\Sigma $ and $\sigma -\partial \sigma \neq \emptyset $. For $\sigma =I$, we would mean a surface for which at least the future ingoing congruence of the principle null $I-$orthogonal directions has a negative expansion. 
\end{convention}
\begin{definition}
By the future (resp. past) domain of dependence of a set $A\subset M$, we mean the set of points $p\in M$ such that every past (resp. future) inextendible timelike curve $\gamma $ passing through $p$ also necessarily intersects $A$. The domain of dependence $D(A)=D^{+}(A)\cup D^{-}(A)$.
\end{definition}
\begin{definition}
By a light-sheet $L$ for a surface $I$, we mean a null hypersurface formed by all null geodesics originating orthogonal to $I$ such that $d^{\pm }\theta ^{\pm }_{L}\leq 0$ and for which $I\subset \partial L$.
\end{definition}
\begin{definition}
(Null curvature condition): If $(M, g)$ is such that for all null vectors $k$ we have
\begin{equation}
    R_{\mu \nu }k^{\mu }k^{\nu }\geq 0,
\end{equation}
then we say that $M$ satisfies the null curvature condition.
\end{definition}
Note that, in itself the null curvature condition does not necessarily imply the Bousso bound -- rather, the role of this condition is to ensure that given a non-positive expansion $\theta _{0}$, under the Raychaudari equation
\begin{equation}
    \frac{d\theta }{d\lambda }=-\frac{1}{2}\theta ^{2}-8\pi GT_{\mu \nu }k^{\mu }k^{\nu }-\sigma _{\mu \nu }\sigma ^{\mu \nu }+\omega _{\mu \nu }\omega ^{\mu \nu },
\end{equation}
(where the vorticity tensor vanishes since we are considering orthogonal null congruences) the variation of the expansion will be non-positive throughout the light-sheet. 
\begin{definition}
The Bousso bound states that for a surface $I$ with a light-sheet $L$ such that $L$ does not terminate before a focal point (i.e. there are no dense regions contributing to a closing or terminating light-sheet before a point $b$ on $L$ as defined below), if $\partial I=\partial L$, then the entropy bound by the light-sheet in terms of the boundary $\partial I$ is given by \eqref{eq:Boussobound}.
\end{definition}
\begin{lemma}
If a point $p\in M$ is on the future boundary of a surface $I$, then the following conditions must hold \cite{Akers:2017nrr}:
\begin{enumerate}
    \item there are no intersecting null geodesics before $p$,
    \item $p$ lies on a null orthogonal geodesic $\gamma $ emanating from $\partial I$, and 
    \item there are no conjugate points on $\gamma $ before $p$.
\end{enumerate}
\end{lemma}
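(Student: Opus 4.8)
The plan is to read the ``future boundary of $I$'' as the boundary of the causal future, $\dot J^{+}(I)=J^{+}(I)\setminus I^{+}(I)$, and to obtain the three conditions as consequences of achronality combined with the standard first- and second-variation arguments of Lorentzian causal theory \cite{Hawking:1973uf}. The organising idea is to argue by contrapositive: if any one of the three conditions failed at $p$, one could construct a timelike curve from $\partial I$ to $p$, placing $p$ in the chronological future $I^{+}(I)$ and hence off the boundary. Since $(M,g)$ is globally hyperbolic, $J^{+}(I)$ is closed and $\dot J^{+}(I)$ is an achronal closed $C^{0}$ hypersurface, so ``$p$ on the boundary'' is exactly the statement $p\notin I^{+}(I)$, which is what each contradiction will violate.

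First I would establish condition (2) from the generator theorem for achronal boundaries \cite{Hawking:1973uf}: every $p\in\dot J^{+}(I)$ not lying on $I$ itself is the future endpoint of a null geodesic segment $\gamma\subset\dot J^{+}(I)$ whose past endpoint lies on the edge $\partial I$. That $\gamma$ must leave $\partial I$ orthogonally is the transversality condition coming from the first variation: a generator with a component tangent to the spacelike surface could have its past endpoint slid along $\partial I$ to gain chronological separation, again forcing $p\in I^{+}(I)$. This fixes $\gamma$ as a null $\partial I$-orthogonal geodesic and yields condition (2).

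Conditions (3) and (1) then follow from the focusing (second-variation) machinery. For (3), suppose $\gamma$ carried a focal point $q$ to $\partial I$ strictly before $p$. The focal-point theorem supplies a fixed-endpoint variation of $\gamma$ beyond $q$ that is timelike, giving a timelike curve from $\partial I$ to $p$ and contradicting achronality; hence no conjugate point precedes $p$. For (1), if two distinct orthogonal null generators of $\dot J^{+}(I)$ met at a point $r$ before $p$, the concatenation $\partial I\to r\to p$ is a causal curve with a genuine corner at $r$ and therefore is not a single unbroken null geodesic; rounding that corner produces a timelike curve from $\partial I$ to $p$, once more pushing $p$ into $I^{+}(I)$. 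Taking contrapositives gives precisely the statement that, for $p$ on the boundary, neither a prior focal point nor a prior intersection of generators can occur.

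I expect the main obstacle to be making the two deformation steps rigorous in the null, codimension-$2$ setting rather than the conceptual structure, which is routine. The delicate ingredients are (i) the correct definition of a focal point of $\partial I$ along a null geodesic orthogonal to a spacelike surface, together with the index-form computation showing the null geodesic fails to be locally maximal---hence deformable to a timelike curve---past the first such point, and (ii) the corner-rounding at an intersection of generators, which must be done in a Fermi-type chart so that the curve obtained is timelike along its whole length and not merely near $r$. Global hyperbolicity, assumed throughout, is what I would lean on to guarantee closedness of $J^{+}(I)$, existence of the limiting generator, and the compactness needed to promote these local deformations into the global achronality contradiction; the same causal-structure input underlies the companion analysis of \cite{Akers:2017nrr}.
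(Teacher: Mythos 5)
The paper never actually proves this lemma: it is imported as a quoted result from \cite{Akers:2017nrr}, so there is no in-paper argument to compare against. Your reconstruction follows essentially the same route as that reference and the underlying textbook machinery of \cite{Hawking:1973uf}: read the future boundary as the achronal boundary $\dot J^{+}(I)$, get condition (2) from the generator theorem plus a first-variation (transversality) argument forcing orthogonality at $\partial I$, get (3) from the focal-point theorem (past the first focal point the generator is deformable to a timelike curve), and get (1) by rounding the corner where two distinct generators meet. The contrapositive structure --- any failure of (1)--(3) lands $p$ in the open set $\mathcal{I}^{+}(I)$, hence off the boundary --- is exactly the standard argument, and your identification of the two delicate points (the null index-form computation and making the corner-rounding timelike globally, not just near the corner) is accurate.

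Two caveats are worth flagging, though neither breaks the proof. First, for a genuinely codimension-$2$ surface $I$ with boundary, the boundary of the future also contains generators leaving \emph{interior} points of $I$ along the two orthogonal null directions, not only generators from the edge $\partial I$; the lemma's restriction to $\partial I$ really reflects the Bousso--Moghaddam setting, where $I$ sits inside a Cauchy slice $\Sigma $ (so points just above the interior of $I$ are chronologically related to $I$) and the operative null surface is $\dot{\mathcal{I}}^{+}(X)-X$ with $X$ the complement of $I$ in $\Sigma $, generated from $\partial I$ alone \cite{Bousso:2022cun}. You should either adopt that reading explicitly or add the interior-orthogonal case to condition (2). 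Second, global hyperbolicity guarantees closedness of $J^{+}(K)$ and past endpoints for boundary generators only for \emph{compact} $K$; here only $\partial I$ is compact while $I$ itself may be non-compact (as in Wall's theorem \cite{Wall:2010jtc}), so the endpoint and limit-curve arguments should be run on the compact edge $\partial I$ rather than on $I$, otherwise past-inextendible generators are not excluded. With those adjustments your proposal is a sound, self-contained substitute for the citation the paper relies on.
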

In this paper, we will assume that there are no caustics interfering with the light-sheet. Therefore, the light-sheet would normally close off if the region were non-hyperentropic. We now turn our attention towards some lemmas that will prove theorem \ref{BoussoArvin}, which will further help us in stating and proving theorem \ref{qba}.
\begin{lemma}
Let $X$ denote the complement of $I$ in $\Sigma $. Then, $D^{+}(I)=D^{+}(\Sigma )-\mathcal{I}^{+}(X)-X$. Further, $D(I)=M-\mathcal{I}(X)$.
\end{lemma}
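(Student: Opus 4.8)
The plan is to establish the first identity by a double-inclusion argument built on the defining property of the Cauchy slice $\Sigma$, and then to obtain the second identity by time reversal together with elementary set algebra. The facts I will take for granted from standard causal theory \cite{Hawking:1973uf} are that, since $\Sigma$ is a Cauchy surface, it is achronal and every inextendible timelike curve meets $\Sigma$ in exactly one point; consequently $D^{+}(\Sigma)=J^{+}(\Sigma)$, $D^{-}(\Sigma)=J^{-}(\Sigma)$, $D^{+}(\Sigma)\cup D^{-}(\Sigma)=M$, and $D^{+}(\Sigma)\cap D^{-}(\Sigma)=\Sigma$. I take $I$ to be closed, so that $\partial I\subset I$ and $X=\Sigma-I$ is the open exterior with $I\cap X=\emptyset$.

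For the inclusion $D^{+}(I)\subseteq D^{+}(\Sigma)-\mathcal{I}^{+}(X)-X$, I first note that $I\subseteq\Sigma$ forces $D^{+}(I)\subseteq D^{+}(\Sigma)$ directly from the definition. To see that $D^{+}(I)$ meets neither $X$ nor $\mathcal{I}^{+}(X)$, I would argue by contradiction: if $p\in X$, a past-directed timelike curve launched from $p$ crosses $\Sigma$ only at $p$ by achronality, and since $p\notin I$ this past-inextendible curve misses $I$, so $p\notin D^{+}(I)$; if instead $p\in\mathcal{I}^{+}(X)$ there is $x\in X$ with $x\ll p$, and prolonging a timelike curve from $p$ through $x$ to the past yields a past-inextendible curve whose unique crossing of $\Sigma$ lies at $x\in X$, again missing $I$. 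Both cases contradict $p\in D^{+}(I)$. For the reverse inclusion, I take $p\in D^{+}(\Sigma)$ with $p\notin\mathcal{I}^{+}(X)\cup X$ and let $\gamma$ be any past-inextendible timelike curve through $p$; since $p\in D^{+}(\Sigma)$, the curve $\gamma$ meets $\Sigma$ at a single point $q$ in the causal past of $p$. Were $q\in X$ we would have either $q\ll p$, whence $p\in\mathcal{I}^{+}(X)$, or $q=p$, whence $p\in X$; both are excluded, so $q\in I\cup\partial I=I$ and $\gamma$ meets $I$. As $\gamma$ was arbitrary, $p\in D^{+}(I)$, completing the first identity.

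For the second identity I would apply the time-reversed version of the first to get $D^{-}(I)=D^{-}(\Sigma)-\mathcal{I}^{-}(X)-X$, and then take the union with $D^{+}(I)$. Using $D^{+}(\Sigma)\cup D^{-}(\Sigma)=M$ together with the achronality relation $\mathcal{I}^{+}(\Sigma)\cap\mathcal{I}^{-}(\Sigma)=\emptyset$, the combined expression collapses to $M-\mathcal{I}^{+}(X)-\mathcal{I}^{-}(X)=M-\mathcal{I}(X)$, where $\mathcal{I}(X)=\mathcal{I}^{+}(X)\cup\mathcal{I}^{-}(X)$, up to the measure-zero exterior slice $X\subset\Sigma$ itself.

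I expect the only real obstacle to be the careful bookkeeping at the splitting surface $\partial I$ and the treatment of the exterior slice $X$. Specifically, one must verify that a point of $D^{+}(I)$ cannot simultaneously lie in $\mathcal{I}^{-}(X)$, which follows from $J^{+}(\Sigma)\cap\mathcal{I}^{-}(\Sigma)=\emptyset$ in the globally hyperbolic, chronology-violation-free setting, and that the closedness convention for $I$ places every forced crossing point $q$ strictly inside $I$ rather than on $\partial I$. These are the steps where the achronal structure of the Cauchy slice does all the work, and where an imprecise treatment of whether $I$ is open or closed would spoil the clean set-theoretic form of the result.
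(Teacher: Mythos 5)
Your proof is correct and follows essentially the same route as the paper's: both arguments track where a past-inextendible timelike curve through a point $p$ must cross the Cauchy slice $\Sigma$, using achronality to force that single crossing into either $I$ or $X$. Yours is in fact more complete than the paper's version, which only sketches the inclusion $D^{+}(\Sigma)-\mathcal{I}^{+}(X)-X\subseteq D^{+}(I)$ and asserts the converse and the second identity; you prove both inclusions and correctly flag that, with the standard (open) definition of $\mathcal{I}^{\pm}$, the second identity really reads $D(I)=M-\mathcal{I}(X)-X$, a discrepancy the paper's statement and proof gloss over.
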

\begin{proof}
Let a point $p\in M$ be $p\in D^{+}(\Sigma )-\mathcal{I}^{+}(X)-X$. Then, a past inextendible timelike curve passing through $p$ would necessarily be in $D^{+}(\Sigma )$, while the subtracted $\mathcal{I}^{+}(X)-X$ would mean that such a curve would not intersect $X$. Due to this, such curves would intersect $I$, and therefore $p\in D^{+}(I)$. Similarly, the same can be said for the past domain of dependence. Due to this, we can state that the domain of dependence of $I$ is $M-\mathcal{I}(X)$.
\end{proof}
\begin{lemma}
$X$ has the property that $\Dot{\mathcal{I}}^{+}(X)-X=\Dot{\mathcal{I}}^{+}(\partial X)-\mathcal{I}^{+}(X)-X$.
\end{lemma}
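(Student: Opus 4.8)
The plan is to establish the set identity by a double inclusion. Throughout I would record the elementary containment $\mathcal{I}^{+}(\partial X)\subseteq\mathcal{I}^{+}(X)$: since $\partial X\subseteq\overline{X}$ and the chronological future of a set agrees with that of its closure (because $\ll$ is an open relation), we have $\mathcal{I}^{+}(\partial X)\subseteq\mathcal{I}^{+}(\overline{X})=\mathcal{I}^{+}(X)$. With this in hand the reverse inclusion is immediate: if $p\in\Dot{\mathcal{I}}^{+}(\partial X)\setminus(\mathcal{I}^{+}(X)\cup X)$, then $p\in\overline{\mathcal{I}^{+}(\partial X)}\subseteq\overline{\mathcal{I}^{+}(X)}$ by monotonicity, while $p\notin\mathcal{I}^{+}(X)$ by hypothesis, so $p\in\overline{\mathcal{I}^{+}(X)}\setminus\mathcal{I}^{+}(X)=\Dot{\mathcal{I}}^{+}(X)$; as $p\notin X$ we land in $\Dot{\mathcal{I}}^{+}(X)\setminus X$.

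The forward inclusion is where the causal geometry enters. Take $p\in\Dot{\mathcal{I}}^{+}(X)\setminus X$. Because a boundary point of the open set $\mathcal{I}^{+}(X)$ cannot belong to it, $p\notin\mathcal{I}^{+}(X)$, so both subtracted sets on the right are automatically avoided, and $p\notin\mathcal{I}^{+}(\partial X)$ by the containment above; it therefore suffices to prove $p\in\overline{\mathcal{I}^{+}(\partial X)}$. (If $p\in\overline{X}\setminus X$ then $p\in\partial X\subseteq\Dot{\mathcal{I}}^{+}(\partial X)$ trivially, so I assume $p\notin\overline{X}$.) Here I would invoke the structure theorem for achronal boundaries: $\Dot{\mathcal{I}}^{+}(X)$ is an achronal $C^{0}$ hypersurface, and through $p$ there runs a past-directed null geodesic generator $\gamma\subset\Dot{\mathcal{I}}^{+}(X)$ that is either past-inextendible or has a past endpoint on $\overline{X}$. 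Global hyperbolicity forces $\gamma$ (or its inextendible extension) to meet the Cauchy slice $\Sigma$, and since $\Sigma$ is acausal we have $\mathcal{I}^{+}(X)\cap\Sigma=\emptyset$; a limit-curve argument then gives $\overline{\mathcal{I}^{+}(X)}\cap\Sigma=\overline{X}$, so the crossing point is some $q\in\Dot{\mathcal{I}}^{+}(X)\cap\Sigma\subseteq\overline{X}$.

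The crux, and the step I expect to be the main obstacle, is to show that this endpoint $q$ lies on the edge $\partial X$ rather than in the $\Sigma$-interior of $X$. The idea is that a future-directed null geodesic leaving an interior point of $X$ enters $\mathcal{I}^{+}(X)$ immediately: in normal coordinates at such a $q$ the generator's null direction $k$ must have a nonvanishing part tangent to $\Sigma$ (a null vector cannot be normal to the spacelike slice), so an arbitrarily small spacelike displacement of $q$ within $X$ lies chronologically below a point of $\gamma$ just to the future of $q$, placing that point in $\mathcal{I}^{+}(X)$ and contradicting $\gamma\subset\Dot{\mathcal{I}}^{+}(X)$. Hence $q\in\partial X$. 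Since $\gamma$ is a future-directed causal curve from $q$ to $p$, we conclude $p\in J^{+}(q)\subseteq\overline{\mathcal{I}^{+}(q)}\subseteq\overline{\mathcal{I}^{+}(\partial X)}$, which is exactly what remained, closing both inclusions.

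Finally I would record the geometric reading that makes the statement useful downstream: $\Dot{\mathcal{I}}^{+}(X)\setminus X$ is precisely the future null congruence fired from $\partial X$ with the sheet that has fallen into $\mathcal{I}^{+}(X)$ deleted, i.e.\ the light-sheet $L$ of the figure. This identification is what will later let us trade the entropy carried on $I$ against the area of $\partial I=\partial X$.
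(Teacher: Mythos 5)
Your proof is correct, but there is no in-paper proof to compare it against: for this lemma the paper explicitly declines to give an argument, deferring to \cite{Bousso:2022cun}. What you have written is essentially the standard causal-structure proof, and it follows the same strategy as the cited reference: the easy inclusion from monotonicity ($\partial X\subseteq\overline{X}$ together with $\mathcal{I}^{+}(\overline{X})=\mathcal{I}^{+}(X)$, both valid because $\mathcal{I}^{+}$ is open), and the hard inclusion via the structure theorem for achronal boundaries plus the ``edge'' argument showing that a generator's past endpoint cannot lie in the $\Sigma$-interior of $X$. Your crux step is sound: since a null vector cannot be orthogonal to a spacelike slice, displacing the endpoint $q$ within $X$ along the spatial projection of the generator's tangent places points of $\gamma$ arbitrarily close to $q$ inside $\mathcal{I}^{+}(X)$, contradicting $\gamma\subset\Dot{\mathcal{I}}^{+}(X)$; and the final appeal to the push-up property, $J^{+}(q)\subseteq\overline{\mathcal{I}^{+}(q)}$, correctly closes the inclusion. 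So you have supplied a self-contained proof where the paper leaves a gap, and it agrees with the identification used immediately afterwards (your closing remark that $\Dot{\mathcal{I}}^{+}(X)\setminus X$ is the congruence fired from $\partial X$ is exactly the content of the paper's next lemma).

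Two steps deserve tightening, though neither is a genuine gap. First, the structure theorem you invoke is stated for closed sets, so you should record $\Dot{\mathcal{I}}^{+}(X)=\Dot{\mathcal{I}}^{+}(\overline{X})$ (immediate from $\mathcal{I}^{+}(X)=\mathcal{I}^{+}(\overline{X})$) before applying it. Second, the assertion $\overline{\mathcal{I}^{+}(X)}\cap\Sigma=\overline{X}$ is left as ``a limit-curve argument''; it is true, but there is a cleaner route that avoids limit curves altogether: if $q\in\Sigma\setminus\overline{X}$, choose a $\Sigma$-open neighbourhood $V\ni q$ with $V\cap X=\emptyset$; then $\mathrm{int}\,D(V)$ is an open neighbourhood of $q$ in $M$, and any $p'\in \mathrm{int}\,D^{+}(V)\cap\mathcal{I}^{+}(X)$ would yield a past-inextendible timelike curve through $p'$ meeting $\Sigma$ both in $V$ and at a point of $X$, contradicting the acausality of the Cauchy slice; hence $\mathcal{I}^{+}(X)$ avoids a neighbourhood of $q$. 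Note also that with this fact in hand your case split simplifies: since $\Dot{\mathcal{I}}^{+}(X)\subseteq J^{+}(\Sigma)$ and a causal curve meets the acausal $\Sigma$ at most once, a past-inextendible generator is impossible here, so every generator terminates on $\overline{X}$ and only your endpoint case actually occurs.
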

The proof of this lemma can be read from \cite{Bousso:2022cun}, which we will not reproduce here. We will instead use this to state the following lemma:
\begin{lemma}
$\Dot{\mathcal{I}}^{+}(X)-X$ is the future outgoing null congruence from $I$ and terminates at either a caustic or at a point where neighbouring null geodesics intersect.
\end{lemma}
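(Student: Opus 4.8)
\emph{Proof strategy.} The plan is to reduce the claim to the classical structure theory of achronal boundaries. First I would feed the preceding lemma into the problem, writing $\Dot{\mathcal{I}}^{+}(X)-X=\Dot{\mathcal{I}}^{+}(\partial X)-\mathcal{I}^{+}(X)-X$ and using $\partial X=\partial I$ to recast the object as a portion of the boundary of the chronological future of the compact codimension $2$ surface $\partial I$. Since $(M,g)$ is globally hyperbolic it is in particular strongly causal, so $\Dot{\mathcal{I}}^{+}(\partial I)$ is an achronal $C^{0}$ hypersurface ruled by null geodesic segments, each admitting a past endpoint on $\overline{\partial I}=\partial I$ \cite{Hawking:1973uf}. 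Because $\partial I$ is a smooth spacelike submanifold, these generators must leave it along the null directions orthogonal to $\partial I$; this is the standard identification of the generators of the future boundary of a codimension $2$ surface with its orthogonal null congruence, so the generators are exactly the orthogonal null geodesics emanating from $\partial I$.

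Next I would determine which of the principal orthogonal null directions survives. The generators that continue into $\mathcal{I}^{+}(X)$, together with the slice portion lying in $X$, are precisely what is removed by subtracting $\mathcal{I}^{+}(X)$ and $X$; what is left is the single future null congruence orthogonal to $\partial I$ that bounds the future of the complementary region from the side of $I$. Matching this with the definition of the light-sheet $L$, whose generators originate orthogonal to $I$ with $I\subset\partial L$, the surviving set is precisely the future outgoing null congruence from $I$, as claimed in the lemma.

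For the termination statement I would invoke the structure theorem for achronal boundaries together with the earlier lemma characterising points on the future boundary. A null generator of $\Dot{\mathcal{I}}^{+}(\partial I)$ remains on the boundary only up to a future endpoint $p$; at such an endpoint one of the three listed conditions must fail, meaning either a neighbouring orthogonal generator reaches $p$ (two null geodesics intersect) or a conjugate/focal point of $\partial I$ is reached at $p$ (a caustic). Beyond $p$ the geodesic enters $\mathcal{I}^{+}(\partial I)$ and no longer lies on the boundary, so every generator of $\Dot{\mathcal{I}}^{+}(X)-X$ terminates at a caustic or at a crossing of neighbouring null geodesics. That such an endpoint is actually attained in finite affine parameter is secured by the Raychaudhuri equation under the null curvature condition with the non-positive expansion assumed for $I$, which forces a conjugate point \cite{Akers:2017nrr}.

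The hard part will be the bookkeeping in the first two steps rather than the termination dichotomy, which is the classical achronal-boundary theorem quoted essentially verbatim. Specifically, the delicate points are justifying rigorously that the subtraction $-\mathcal{I}^{+}(X)-X$ isolates exactly one orthogonal null direction with no surviving interior generators, and pinning down its orientation so that it coincides with the outgoing congruence attached to $I$; the global hyperbolicity of $(M,g)$ is what guarantees that the generators are well behaved up to their endpoints and that no pathological accumulation of generators occurs.
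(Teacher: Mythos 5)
Your proposal is correct and follows essentially the same route as the paper, which likewise combines the preceding decomposition lemma $\Dot{\mathcal{I}}^{+}(X)-X=\Dot{\mathcal{I}}^{+}(\partial X)-\mathcal{I}^{+}(X)-X$ with the boundary-point conditions of \cite{Akers:2017nrr} to identify the set as the orthogonal null congruence from $\partial I$ leaving the boundary only at a caustic or self-intersection; your write-up merely spells out the achronal-boundary structure theory that the paper's two-line proof cites implicitly. The only caveat is your closing Raychaudhuri remark: finite-affine-parameter termination is not part of the lemma's claim and, as used later in the theorem, requires the completeness assumption introduced there, so you should not present it as already secured at this stage.
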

\begin{convention}
We will adopt the following convention: future (resp. past) outgoing null orthogonal congruence will be labelled as $K^{\pm }$, while the future (resp. past) ingoing null orthogonal congruence will be labelled as $L^{\pm }$. 
\end{convention}
\begin{proof}
This follows from \cite{Akers:2017nrr}, where naturally the null geodesics that compose the future outgoing congruence do not enter $\mathcal{I}^{+}(X)$ (i.e. exist the future boundary of $K^{+}(I)$) unless we encounter a caustic or self-intersection. Due to this and the previous lemma, we can state that the set $\Dot{\mathcal{I}}^{+}(X)-X$ is generated by the future null orthogonal geodesics originating from $\partial I$.
\end{proof}
The proof of theorem \ref{BoussoArvin} is as follows:
\begin{proof}
The light-sheet $L^{+}=\Dot{\mathcal{I}}^{+}(X)-X$ is a surface such that the boundary $\partial L=\partial I$, and from lemma 6 this would be a null hypersurface. Further, since the expansion condition $\theta <0$ holds on $\partial I$ along $L$, due to the null curvature condition it must be negative throughout the hypersurface (before encountering a caustic), implying $L$ is a light-sheet. If we assume that all generators of $L$ are necessarily complete, it must also be true that the light-sheet is compact, since the light-sheet generators must leave $L$ after a fixed rescaled affine length, implying a closed $L$. Now, since $\partial L=\partial I$, the domain of dependence $D(L)=D(I)$ (from proposition 10 in \cite{Bousso:2022cun})  and therefore the Bousso bound would be contradicted, since by assumption the region $I$ is hyperentropic, and therefore the light-sheet must not close to preserve the Bousso bound, contradicting our assumption that all the null generators of $L$ are necessarily complete.
\end{proof}

It should be possible to extend this result into the quantum limit, where the expansion takes a slightly different meaning. In the classical limit, one would expect the expansion to be the measure of the rate of change of area via deformations, while in the quantum regimes, one would expect this to be a measure of the rate of change of \textit{generalized entropy} \cite{Bekenstein:1974ax} via deformations:
\begin{equation}\label{eq:qexp}
    \Uptheta =\lim _{\mathcal{A}\to 0}\; \frac{4G\hbar }{\mathcal{A}}\frac{dS_{gen}(\sigma , \Sigma )}{d\lambda }|_{x\in \sigma }.
\end{equation}
In general, surfaces are defined as:

\begin{center}
\begin{tabular}{ |c|c|c| } 
 \hline
  & $\theta ^{+}_{k}$ & $\theta ^{+}_{l}$ \\ 
 Q Trapped & $-$ & $-$ \\ 
 Q marginally trapped & $0$ & $-$ \\ 
 Q Untrapped & $+$ & $-$ \\
 Q Extremal & $0$ & $0$ \\
 \hline
\end{tabular}
\label{table:expsign}
\end{center}

Before we prove the theorem \ref{qba}, we will first look at Aron Wall's result \cite{Wall:2010jtc} from generalized entropy, which goes analogously to that of Penrose's theorem. In fact, Penrose's statement can be written into stating that if one has a non-compact region with a compact boundary with the expansion along at least one future null surface orthogonal congruence being negative, due to global hyperbolicity of $(M, g)$ a non-compact slice cannot evolve into a compact slice, which is a topological constraint using which we state that there has to exist some null geodesic incompleteness. That is, if we have a compact splitting boundary with the surface compact and if the expansion condition holds, then we can state that the light-sheet would form $L=\partial ^{+}D(I)$, and then at least one generator of this must be incomplete. This is essentially the backing to the theorems discussed previously -- we can consider $L$ to be homeomorphically identified to $I$, and therefore if $L$ is compact, $I$ must be compact as well. If all the congruences have a negative expansion, we call the surface $I$ a trapped surface. Similarly, one defines a quantum trapped surface as follows: let a compact $I$ split the Cauchy slice $\Sigma $ into an interior and an exterior region. Then, the future boundary of the domain of dependence of the exterior $\text{Ext}(I)$ is defined as a null surface $K^{+}$ constructed from future outgoing light rays from $I$. If there is a way to evolve the slice so that the fine-grained generalized entropy is decreasing along $L$, then we say that $I$ has a negative quantum expansion throughout $L$ under the quantum focusing conjecture. Wall's semiclassical theorem \cite{Wall:2010jtc} can be stated as the following for the future ingoing null congruence with a negative expansion:
\begin{theorem}\label{wall1}
Let $I$ be a non-compact surface with a compact boundary $\partial I$ as considered previously. Then, if the quantum expansion $\Uptheta (l)<0$, then there exists at least one incomplete null generator.
\end{theorem}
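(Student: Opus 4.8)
The plan is to mirror the Penrose-type topological argument already used for Theorem \ref{BoussoArvin}, upgraded to the semiclassical setting by trading the classical expansion for the quantum expansion \eqref{eq:qexp} and the null curvature condition for the quantum focusing conjecture. In the classical theorem the null curvature condition, through Raychaudhuri, guarantees that a non-positive $\theta^{+}_{l}$ on $\partial I$ remains non-positive along the future ingoing congruence, which forces focal points and hence a compact light-sheet whenever every generator is complete. In the quantum case the analogous monotonicity is supplied by the quantum focusing conjecture, $d\Uptheta/d\lambda \leq 0$, so that the local statement about the rate of change of area is replaced by one about the rate of change of the generalized entropy $S_{gen}$.

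First I would construct the relevant null hypersurface as the future boundary of the domain of dependence, $L = \partial^{+} D(I)$, which by the preceding lemmas is generated by the future ingoing null geodesics orthogonal to $\partial I$ and obeys $\partial L = \partial I$. Next I would assume, for contradiction, that every generator of $L$ is future-complete. Since $\Uptheta(l) < 0$ on $\partial I$ by hypothesis, the quantum focusing conjecture keeps $\Uptheta$ strictly negative and non-increasing along each generator. The crucial step is then to show that a strictly negative, non-increasing quantum expansion cannot persist along a complete generator indefinitely: each generator must leave $L$ at a \emph{quantum focal point} after finite affine parameter, so that $L$ closes off and becomes compact. Finally, identifying $L$ homeomorphically with $I$ exactly as for Theorem \ref{BoussoArvin}, compactness of $L$ would force $I$ to be compact, contradicting the hypothesis that $I$ is non-compact; hence at least one generator must be incomplete.

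The hard part will be justifying the quantum focal point step, because the quantum expansion is nonlocal: through $S_{gen}$ it depends on the whole region bounded by the cut of $L$ rather than on local geometric data along a single generator, so there is no pointwise Raychaudhuri equation forcing conjugate points within a bounded affine interval. This is where I would invoke the generalized second law, $\delta S_{gen} > 0$: if every generator were complete, $L$ would extend as a causal horizon to which the generalized second law applies, requiring the generalized entropy to be non-decreasing toward the future and hence $\Uptheta \geq 0$ asymptotically, in direct conflict with the monotone-negative quantum expansion delivered by the quantum focusing conjecture. Reconciling the local content of the quantum focusing conjecture with the global content of the generalized second law---and verifying that the nonlocality of $S_{gen}$ does not spoil the homeomorphic identification $L \cong I$ that underlies the topological obstruction---is the principal difficulty; once it is settled, the remainder runs exactly parallel to the classical proof of Theorem \ref{BoussoArvin}.
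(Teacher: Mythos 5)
Your proposal follows essentially the same route as the paper's own proof: assume all generators of $L$ are complete, argue that the negative quantum expansion (i.e.\ decreasing generalized entropy) forces each generator to leave $L$ at finite affine parameter so that $L$ would be compact, and then use the timelike-vector-field projection $\varphi: L\to I$ to contradict the non-compactness of $I$. The difference is one of rigor rather than of route. The paper disposes of the crucial compactness step in a single sentence (``since the generalized entropy is decreasing, the compactness of $L$ in terms of the generators can be found''), whereas you correctly identify this as the hard part --- the quantum expansion is nonlocal, depending on $S_{gen}$ of an entire region, so no pointwise Raychaudhuri equation produces quantum focal points within bounded affine distance --- and you propose to close the gap by applying the generalized second law to the causal horizon that a complete $L$ would constitute. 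That supplement is in substance Wall's own argument in \cite{Wall:2010jtc}, which the paper cites but never actually reproduces, so your proposal amounts to the paper's proof with its one asserted step given the justification it needs. One caution: your final topological step, like the paper's, states that compactness of $L$ forces compactness of $I$ via $\varphi$; to make this airtight you need, as in Penrose's original argument, that the image of $L$ in $I$ (or in a Cauchy slice) is both open and closed and hence, by connectedness, all of $I$ --- neither you nor the paper spells this out, and without it the contradiction between a compact image and a non-compact target does not yet follow.
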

\begin{proof}
This is similar to the classical Penrose's theorem discussed above. Since the generalized entropy is decreasing, the compactness of $L$ in terms of the generators can be found.  Defining a timelike vector field in $(M ,g)$, every curve intersecting $L$ would also intersect $I$ under a continuous $1-1$ mapping $\varphi :L\to I$ (this can be replaced by a non-compact Cauchy surface to get Penrose's theorem). Since the image $\varphi (L(I))$ would also be compact as $L$ is compact, this would not cover all of $I$, since this is non-compact. Due to this, there must exist at least one incomplete null generator. 
\end{proof}
We now prove theorem \ref{qba} with respect to the quantum focusing condition.
\begin{proof}
Since the generalized entropy on the null hypersurface $L$ is decreasing, the generator segments on $L$ must terminate at some finite affine value. Here on, the proof will take up a similar structure as of the original theorem proof, the similarity being that in the classical version the classical expansion is negative, while in this case the quantum expansion being negative implies a decreasing generalized entropy. Note that this would imply a violation of the generalized second law when the surface in question is a horizon, in which case we would have to reconsider the entire scheme. However, we are considering this surface to simply have the future ingoing null orthogonal congruence expansion to be negative, and we do not make statements about the outgoing congruence. 

Similarly to the classical expansion, the quantum expansion being negative implies that the generators must leave $L$ after some finite affine length, which would again imply a closed $L$. Since the boundaries of $L$ as well as that of $I$ are the same, the domains of dependence become $D(L)=D(I)$ \cite{Bousso:2022cun}, and due to this the entropies must satisfy $S(L)=S(I)$. Since $S(I)$ was assumed to be hyperentropic, this would imply a violation of the Bousso bound (in a sense we will discuss below), and therefore the light-sheet cannot close off. 
\end{proof}

From this result one could argue that the entropy contained in a surface with a compact boundary has to be in satisfaction with the Bousso bound, and the "overflow" of information in a region of spacetime affects the geodesics passing through the surface. The generalized entropy perspective reduces to the classical expansion, stating that the null hypersurface $L$ is compact. 

The implications of this theorem are straightforward -- let $g_{0}$ be a point on a null generator of the null hypersurface $L$ in consideration. Then, if the semiclassical approximation is valid around $g_{0}$, and if the surface $I$ is hyperentropic, then by evolving $I$ in time so that the generalized entropy on the null hypersurface $L$ is decreasing, the generalized second law would require that at least one null generator of $L$ is incomplete. If this null hypersurface were a horizon, the GSL would be violated \cite{Wall:2010jtc}. This can be viewed in the cosmological perspective by identifying a compact region such that the relative scaling between $A(I)$ and $S(I)$ is such that the region becomes hyperentropic, such as in the case of an flat expanding cosmology (refer to figure below). If $I$ were non-compact, this would imply a null geodesic incompleteness directly from theorem \ref{wall1}, and the past light-sheet would run into a big bang singularity. In the semiclassical result theorem \ref{qba} we get a similar result, for a case when the semiclassical approximation holds for a generator point $g_{0}$. 

\begin{center}

\tikzset{every picture/.style={line width=0.70pt}} %set default line width to 0.75pt        

\begin{tikzpicture}[x=0.70pt,y=0.70pt,yscale=-0.85,xscale=0.85]
%uncomment if require: \path (0,300); %set diagram left start at 0, and has height of 300

%Shape: Wave [id:dp6479819359763381] 
\draw   (163.73,284.93) .. controls (164.36,286.47) and (164.96,287.93) .. (165.66,287.93) .. controls (166.36,287.93) and (166.97,286.47) .. (167.6,284.93) .. controls (168.23,283.39) and (168.84,281.93) .. (169.54,281.93) .. controls (170.24,281.93) and (170.84,283.39) .. (171.48,284.93) .. controls (172.11,286.47) and (172.71,287.93) .. (173.41,287.93) .. controls (174.12,287.93) and (174.72,286.47) .. (175.35,284.93) .. controls (175.98,283.39) and (176.59,281.93) .. (177.29,281.93) .. controls (177.99,281.93) and (178.6,283.39) .. (179.23,284.93) .. controls (179.86,286.47) and (180.46,287.93) .. (181.17,287.93) .. controls (181.87,287.93) and (182.47,286.47) .. (183.1,284.93) .. controls (183.74,283.39) and (184.34,281.93) .. (185.04,281.93) .. controls (185.74,281.93) and (186.35,283.39) .. (186.98,284.93) .. controls (187.61,286.47) and (188.22,287.93) .. (188.92,287.93) .. controls (189.62,287.93) and (190.22,286.47) .. (190.85,284.93) .. controls (191.49,283.39) and (192.09,281.93) .. (192.79,281.93) .. controls (193.49,281.93) and (194.1,283.39) .. (194.73,284.93) .. controls (195.36,286.47) and (195.97,287.93) .. (196.67,287.93) .. controls (197.37,287.93) and (197.97,286.47) .. (198.61,284.93) .. controls (199.24,283.39) and (199.84,281.93) .. (200.54,281.93) .. controls (201.24,281.93) and (201.85,283.39) .. (202.48,284.93) .. controls (203.11,286.47) and (203.72,287.93) .. (204.42,287.93) .. controls (205.12,287.93) and (205.73,286.47) .. (206.36,284.93) .. controls (206.99,283.39) and (207.59,281.93) .. (208.29,281.93) .. controls (209,281.93) and (209.6,283.39) .. (210.23,284.93) .. controls (210.86,286.47) and (211.47,287.93) .. (212.17,287.93) .. controls (212.87,287.93) and (213.48,286.47) .. (214.11,284.93) .. controls (214.74,283.39) and (215.34,281.93) .. (216.05,281.93) .. controls (216.75,281.93) and (217.35,283.39) .. (217.98,284.93) .. controls (218.62,286.47) and (219.22,287.93) .. (219.92,287.93) .. controls (220.62,287.93) and (221.23,286.47) .. (221.86,284.93) .. controls (222.49,283.39) and (223.1,281.93) .. (223.8,281.93) .. controls (224.5,281.93) and (225.1,283.39) .. (225.74,284.93) .. controls (226.37,286.47) and (226.97,287.93) .. (227.67,287.93) .. controls (228.37,287.93) and (228.98,286.47) .. (229.61,284.93) .. controls (230.24,283.39) and (230.85,281.93) .. (231.55,281.93) .. controls (232.25,281.93) and (232.85,283.39) .. (233.49,284.93) .. controls (234.12,286.47) and (234.72,287.93) .. (235.42,287.93) .. controls (236.13,287.93) and (236.73,286.47) .. (237.36,284.93) .. controls (237.99,283.39) and (238.6,281.93) .. (239.3,281.93) .. controls (240,281.93) and (240.61,283.39) .. (241.24,284.93) .. controls (241.87,286.47) and (242.47,287.93) .. (243.18,287.93) .. controls (243.88,287.93) and (244.48,286.47) .. (245.11,284.93) .. controls (245.75,283.39) and (246.35,281.93) .. (247.05,281.93) .. controls (247.75,281.93) and (248.36,283.39) .. (248.99,284.93) .. controls (249.62,286.47) and (250.23,287.93) .. (250.93,287.93) .. controls (251.63,287.93) and (252.23,286.47) .. (252.86,284.93) .. controls (253.5,283.39) and (254.1,281.93) .. (254.8,281.93) .. controls (255.5,281.93) and (256.11,283.39) .. (256.74,284.93) .. controls (257.37,286.47) and (257.98,287.93) .. (258.68,287.93) .. controls (259.38,287.93) and (259.98,286.47) .. (260.62,284.93) .. controls (261.25,283.39) and (261.85,281.93) .. (262.55,281.93) .. controls (263.26,281.93) and (263.86,283.39) .. (264.49,284.93) .. controls (265.12,286.47) and (265.73,287.93) .. (266.43,287.93) .. controls (267.13,287.93) and (267.74,286.47) .. (268.37,284.93) .. controls (269,283.39) and (269.6,281.93) .. (270.31,281.93) .. controls (271.01,281.93) and (271.61,283.39) .. (272.24,284.93) .. controls (272.88,286.47) and (273.48,287.93) .. (274.18,287.93) .. controls (274.88,287.93) and (275.49,286.47) .. (276.12,284.93) .. controls (276.75,283.39) and (277.36,281.93) .. (278.06,281.93) .. controls (278.76,281.93) and (279.36,283.39) .. (279.99,284.93) .. controls (280.63,286.47) and (281.23,287.93) .. (281.93,287.93) .. controls (282.63,287.93) and (283.24,286.47) .. (283.87,284.93) .. controls (284.5,283.39) and (285.11,281.93) .. (285.81,281.93) .. controls (286.51,281.93) and (287.11,283.39) .. (287.75,284.93) .. controls (288.38,286.47) and (288.98,287.93) .. (289.68,287.93) .. controls (290.38,287.93) and (290.99,286.47) .. (291.62,284.93) .. controls (292.25,283.39) and (292.86,281.93) .. (293.56,281.93) .. controls (294.26,281.93) and (294.86,283.39) .. (295.5,284.93) .. controls (296.13,286.47) and (296.73,287.93) .. (297.43,287.93) .. controls (298.14,287.93) and (298.74,286.47) .. (299.37,284.93) .. controls (300,283.39) and (300.61,281.93) .. (301.31,281.93) .. controls (302.01,281.93) and (302.62,283.39) .. (303.25,284.93) .. controls (303.88,286.47) and (304.48,287.93) .. (305.19,287.93) .. controls (305.89,287.93) and (306.49,286.47) .. (307.12,284.93) .. controls (307.76,283.39) and (308.36,281.93) .. (309.06,281.93) .. controls (309.76,281.93) and (310.37,283.39) .. (311,284.93) .. controls (311.63,286.47) and (312.24,287.93) .. (312.94,287.93) .. controls (313.64,287.93) and (314.24,286.47) .. (314.88,284.93) .. controls (315.51,283.39) and (316.11,281.93) .. (316.81,281.93) .. controls (317.51,281.93) and (318.12,283.39) .. (318.75,284.93) .. controls (319.38,286.47) and (319.99,287.93) .. (320.69,287.93) .. controls (321.39,287.93) and (321.99,286.47) .. (322.63,284.93) .. controls (323.26,283.39) and (323.86,281.93) .. (324.56,281.93) .. controls (325.27,281.93) and (325.87,283.39) .. (326.5,284.93) .. controls (327.13,286.47) and (327.74,287.93) .. (328.44,287.93) .. controls (329.14,287.93) and (329.75,286.47) .. (330.38,284.93) .. controls (331.01,283.39) and (331.61,281.93) .. (332.32,281.93) .. controls (333.02,281.93) and (333.62,283.39) .. (334.25,284.93) .. controls (334.89,286.47) and (335.49,287.93) .. (336.19,287.93) .. controls (336.89,287.93) and (337.5,286.47) .. (338.13,284.93) .. controls (338.76,283.39) and (339.37,281.93) .. (340.07,281.93) .. controls (340.77,281.93) and (341.37,283.39) .. (342,284.93) .. controls (342.64,286.47) and (343.24,287.93) .. (343.94,287.93) .. controls (344.64,287.93) and (345.25,286.47) .. (345.88,284.93) .. controls (346.51,283.39) and (347.12,281.93) .. (347.82,281.93) .. controls (348.52,281.93) and (349.12,283.39) .. (349.76,284.93) .. controls (350.39,286.47) and (350.99,287.93) .. (351.69,287.93) .. controls (352.4,287.93) and (353,286.47) .. (353.63,284.93) .. controls (354.26,283.39) and (354.87,281.93) .. (355.57,281.93) .. controls (356.27,281.93) and (356.88,283.39) .. (357.51,284.93) .. controls (358.14,286.47) and (358.74,287.93) .. (359.45,287.93) .. controls (360.15,287.93) and (360.75,286.47) .. (361.38,284.93) .. controls (362.01,283.39) and (362.62,281.93) .. (363.32,281.93) .. controls (364.02,281.93) and (364.63,283.39) .. (365.26,284.93) .. controls (365.89,286.47) and (366.5,287.93) .. (367.2,287.93) .. controls (367.9,287.93) and (368.5,286.47) .. (369.13,284.93) .. controls (369.77,283.39) and (370.37,281.93) .. (371.07,281.93) .. controls (371.77,281.93) and (372.38,283.39) .. (373.01,284.93) .. controls (373.64,286.47) and (374.25,287.93) .. (374.95,287.93) .. controls (375.65,287.93) and (376.25,286.47) .. (376.89,284.93) .. controls (377.52,283.39) and (378.12,281.93) .. (378.82,281.93) .. controls (379.52,281.93) and (380.13,283.39) .. (380.76,284.93) .. controls (381.39,286.47) and (382,287.93) .. (382.7,287.93) .. controls (383.4,287.93) and (384,286.47) .. (384.64,284.93) .. controls (385.27,283.39) and (385.87,281.93) .. (386.57,281.93) .. controls (387.28,281.93) and (387.88,283.39) .. (388.51,284.93) .. controls (389.14,286.47) and (389.75,287.93) .. (390.45,287.93) .. controls (391.15,287.93) and (391.76,286.47) .. (392.39,284.93) .. controls (393.02,283.39) and (393.62,281.93) .. (394.33,281.93) .. controls (395.03,281.93) and (395.63,283.39) .. (396.26,284.93) .. controls (396.9,286.47) and (397.5,287.93) .. (398.2,287.93) .. controls (398.9,287.93) and (399.51,286.47) .. (400.14,284.93) .. controls (400.77,283.39) and (401.38,281.93) .. (402.08,281.93) .. controls (402.78,281.93) and (403.38,283.39) .. (404.02,284.93) .. controls (404.65,286.47) and (405.25,287.93) .. (405.95,287.93) .. controls (406.65,287.93) and (407.26,286.47) .. (407.89,284.93) .. controls (408.52,283.39) and (409.13,281.93) .. (409.83,281.93) .. controls (410.53,281.93) and (411.13,283.39) .. (411.77,284.93) .. controls (412.4,286.47) and (413,287.93) .. (413.7,287.93) .. controls (414.41,287.93) and (415.01,286.47) .. (415.64,284.93) .. controls (416.27,283.39) and (416.88,281.93) .. (417.58,281.93) .. controls (418.28,281.93) and (418.89,283.39) .. (419.52,284.93) .. controls (420.15,286.47) and (420.75,287.93) .. (421.46,287.93) .. controls (422.16,287.93) and (422.76,286.47) .. (423.39,284.93) .. controls (424.03,283.39) and (424.63,281.93) .. (425.33,281.93) .. controls (426.03,281.93) and (426.64,283.39) .. (427.27,284.93) .. controls (427.9,286.47) and (428.51,287.93) .. (429.21,287.93) .. controls (429.91,287.93) and (430.51,286.47) .. (431.14,284.93) .. controls (431.78,283.39) and (432.38,281.93) .. (433.08,281.93) .. controls (433.78,281.93) and (434.39,283.39) .. (435.02,284.93) .. controls (435.65,286.47) and (436.26,287.93) .. (436.96,287.93) .. controls (437.66,287.93) and (438.26,286.47) .. (438.9,284.93) .. controls (439.53,283.39) and (440.13,281.93) .. (440.83,281.93) .. controls (441.54,281.93) and (442.14,283.39) .. (442.77,284.93) .. controls (443.4,286.47) and (444.01,287.93) .. (444.71,287.93) .. controls (445.41,287.93) and (446.02,286.47) .. (446.65,284.93) .. controls (447.28,283.39) and (447.88,281.93) .. (448.58,281.93) .. controls (448.81,281.93) and (449.02,282.07) .. (449.23,282.33) ;
%Straight Lines [id:da9396072764312473] 
\draw [line width=0.75]    (163.5,26.43) -- (163.48,285.52) ;
%Straight Lines [id:da06439431589304312] 
\draw    (163.5,26.43) -- (449.12,281.92) ;
%Straight Lines [id:da8644598969886306] 
\draw [color={rgb, 255:red, 74; green, 144; blue, 226 }  ,draw opacity=1 ][line width=1.5]    (307.2,217.7) -- (447.05,283.01) ;
%Straight Lines [id:da19337200844641989] 
\draw [color={rgb, 255:red, 255; green, 5; blue, 5 }  ,draw opacity=1 ]   (340.01,281.93) -- (307.2,217.7) ;
%Straight Lines [id:da28640870206297175] 
\draw [color={rgb, 255:red, 255; green, 5; blue, 5 }  ,draw opacity=1 ]   (239.18,282.05) -- (273.18,210.05) ;
%Straight Lines [id:da14934317837486621] 
\draw [color={rgb, 255:red, 74; green, 144; blue, 226 }  ,draw opacity=1 ][line width=1.5]    (162.2,210.1) -- (273.18,210.05) ;

% Text Node
\draw (352.75,216.47) node [anchor=north west][inner sep=0.75pt]  [font=\footnotesize]  {$I_{2}$};
% Text Node
\draw (303.92,199.54) node [anchor=north west][inner sep=0.75pt]  [font=\footnotesize]  {$\partial I_{2}$};
% Text Node
\draw (343.73,259.83) node [anchor=north west][inner sep=0.75pt]  [font=\footnotesize]  {$L^{-\ }( \partial I_{2})$};
% Text Node
\draw (272.59,190.54) node [anchor=north west][inner sep=0.75pt]  [font=\footnotesize]  {$\partial I_{1}$};
% Text Node
\draw (211.5,189.1) node [anchor=north west][inner sep=0.75pt]  [font=\footnotesize]  {$I_{1}$};
% Text Node
\draw (195.5,241.1) node [anchor=north west][inner sep=0.75pt]  [font=\footnotesize]  {$L^{-}( \partial I_{1})$};

\end{tikzpicture}
\end{center}

It must be noted that the quantum expansion is satisfying the quantum focusing condition \cite{Bousso:2015mna}, namely that analogously to that of the classical expansion, under infinitesimal deformations of a surface along a null ray the quantum expansion cannot increase:
\begin{equation}
    \frac{\delta}{\delta V\left(y\right)} \Theta\left[V(y) ; y_{1}\right] \leq 0.
\end{equation}
Here, the positive function $V(y)$ defines the null hypersurface $L$ with $y\in I$. Then, we can define the quantum expansion $\Uptheta [V(y); y_{1}]$ as in equation \eqref{eq:qexp}, in terms of another point $y_{1}$ around which we are considering the "patch" area $\mathcal{A}$. Then, the \textit{quantum Bousso bound} \cite{Strominger:2003br, Bousso:2014sda} is the statement that along a light-sheet, the generalized entropy must be monotonically decreasing if the quantum expansion is negative. Finding a surface $\sigma '$, we require that the quantum expansion does not become positive, and that the generalized entropy is decreasing throughout the hypersurface. Then, defining the generalized entropy $S_{gen}(\sigma , \Sigma )$ as \cite{Bousso:2015mna}:
\begin{equation}
    S_{gen}(\sigma , \Sigma )=\frac{A(\sigma )}{4G\hbar }+S_{ext}(\sigma , \Sigma ),
\end{equation}
we can say that at $\sigma '$, we would have
\begin{equation}
    \Delta S_{ext}(\sigma , \sigma ')\leq \frac{\Delta A(\sigma , \sigma ')}{4G\hbar }.
\end{equation}
This would imply a \textit{quantum null energy condition}, which states that following holds:
\begin{equation}
    \langle T_{kk}\rangle \geq \lim _{\
    \mathcal{A}\to 0}\frac{\hbar }{2\pi \mathcal{A}}\frac{d^{2}S_{ext}}{d\lambda ^{2}}.
\end{equation}
It must be noted here that the quantum focusing condition in itself makes not much difference to the proof in the theorem \ref{qba} than the assumption that the generalized entropy must be decreasing and must terminate the null generator segments at some affine value. However, in understanding the effect of the generalized second law as a constraint on the topological conditions on $(M, g)$, this has a very significant role. For instance, one would expect the generalized entropy to be in terms of the QFC, which has been shown in the case of holographic screens (or \textit{marginally trapped tubes}) in \cite{Bousso:2015mqa, Bousso:2015qqa, Bousso:2015eda}.
We have not considered the future outgoing null congruence in a strict sense -- we have only made statements about the $l^{\mu }$ congruence. If we consider the outgoing congruence to have a vanishing expansion, we would refer to a marginally trapped surface. In this case, we would expect the spacetime to contain a horizon, under which we can construct a holographic screen (quantum holographic screen if the surfaces are quantum marginally trapped surfaces), which would equip the codimension $1$ surface formed by the foliation of these marginally trapped surfaces with an area law that implies a generalized second law. This is based off the QFC, which ensures the GSL is implied throughout the foliation of marginally trapped surfaces (called "leaves"). In a paper on the use of holographic screens in cosmological evolution \cite{Carroll:2017kjo}, it was shown that the holographic screens form of the area law predicts that the late evolved states of cosmologies can be found to be that of de Sitter spacetime, implying the cosmic no-hair theorem. Further, in the case of a quantum trapped surface, we can invoke Wall's theorem \cite{Wall:2010jtc} to find out null geodesic incompleteness in the spacetime. One can define Wall's version directly in terms of $\Uptheta $, which would have the same effect as stating that the fine-grained $S_{gen}(\sigma , \Sigma )$ is decreasing, but with the constraint that this variation is monotonic, which we can impose via the QFC. In fact, Wall's semiclassical generalization can be extended from the non-compact case of $I$ to a compact case under the assumption that along the $K$ congruence the expansion is also negative. Then, we have the following theorem \cite{Wall:2010jtc}:
\begin{theorem}
If a globally hyperbolic spacetime contains a surface $K$ as mentioned above and there is a non-compact Cauchy surface $H$, then there exists a null geodesic incompleteness.
\end{theorem}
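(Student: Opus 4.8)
The plan is to run the semiclassical analogue of Penrose's argument, now for a genuinely compact (closed) trapped surface rather than for a surface with compact boundary. I would argue by contradiction: suppose every null geodesic in \Mg{} is complete. Since $K$ is a quantum trapped surface, both future null orthogonal congruences $K^{+}$ (along $k$) and $L^{+}$ (along $l$) have negative quantum expansion, $\Uptheta^{+}_{k}<0$ and $\Uptheta^{+}_{l}<0$, so the fine-grained generalized entropy is monotonically decreasing along each of them. The relevant object is the achronal boundary $\Dot{\mathcal{I}}^{+}(K)$, which by the same reasoning used for $L^{+}$ in the proofs of Theorem \ref{wall1} and Theorem \ref{qba} is a null hypersurface generated by null geodesics leaving $K$ orthogonally, with past endpoints on $K$. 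Because $K$ is compact with no boundary, $\Dot{\mathcal{I}}^{+}(K)$ is edgeless. The key structural difference from the compact-boundary case is precisely that $K$ is closed, so the future boundary is fed by generators in both null directions, which is why the hypothesis that the expansion along the $K$ congruence is also negative is needed.

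The central step is to show that $\Dot{\mathcal{I}}^{+}(K)$ is compact. Invoking the quantum focusing condition exactly as in the proof of Theorem \ref{qba}, the monotone decrease of $S_{gen}$ along both congruences forces every generator to leave the boundary after a finite affine length; this is the semiclassical replacement for the conjugate point or caustic that the classical Raychaudhuri equation, together with the null curvature condition, would guarantee. Parametrizing the generators by a point of $K$ together with an affine parameter running up to its termination value, $\Dot{\mathcal{I}}^{+}(K)$ is the continuous image of a closed and bounded, hence compact, parameter domain, and is therefore compact. Completeness of the geodesics is what ensures that the generators actually reach their termination points inside $M$, so that the boundary closes off.

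I would then exploit global hyperbolicity. Choosing a timelike vector field on \Mg{} and flowing along its integral curves gives a continuous injection $\varphi:\Dot{\mathcal{I}}^{+}(K)\to H$ into the non-compact Cauchy surface $H$, the same construction used in Theorem \ref{wall1}. The image $\varphi(\Dot{\mathcal{I}}^{+}(K))$ is compact, hence closed in $H$; since $\Dot{\mathcal{I}}^{+}(K)$ and $H$ are both codimension-$1$ topological manifolds and $\varphi$ is a local homeomorphism, the image is also open. As $H$ is connected, a nonempty subset that is simultaneously open and closed must equal $H$, which contradicts the compactness of the image against the non-compactness of $H$. This contradiction defeats the assumption that all null geodesics are complete, so at least one null generator must be incomplete.

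The hard part will be the compactness step, i.e.\ justifying that a negative quantum expansion on a \emph{closed} surface terminates \emph{all} boundary generators at finite affine length. Unlike the classical case, $\Uptheta$ is a nonlocal quantity, so ``termination'' cannot be read off pointwise from focusing but must be extracted from the monotone decrease of $S_{gen}$ enforced by the quantum focusing conjecture, and it must hold uniformly in both null directions so that $\Dot{\mathcal{I}}^{+}(K)$ genuinely closes rather than leaking generators to infinity. A secondary technical point is controlling the achronal-boundary structure and edgelessness of $\Dot{\mathcal{I}}^{+}(K)$ in the semiclassical setting, where the generators are characterized through the generalized entropy rather than through the Raychaudhuri optical scalars.
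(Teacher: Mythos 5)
Your proposal is correct and follows essentially the same route as the paper's own proof: the Penrose-style argument in which the negative quantum expansion (decreasing fine-grained $S_{gen}$) forces the generators of the future boundary to terminate at finite affine parameter, the resulting compactness of that boundary, and the timelike-vector-field map into the non-compact Cauchy surface $H$ yielding a topological contradiction. Your version is in fact more careful than the paper's sketch --- notably in making the clopen-image argument explicit and in flagging that the nonlocality of $\Uptheta$ makes the uniform finite-termination step the real burden --- but it is the same proof in structure and substance.
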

\begin{proof}
The proof of this theorem is rather straightforward, with mostly the natural proof of the Penrose theorem proving that such a spacetime contains a geodesic incompleteness. Since the fine-grained generalized entropy is decreasing on the null hypersurface, the null generator segments must terminate at some affine value $\lambda $. The next parts of the proof are directly that of Penrose's, and the topological constraint is that a non-compact slice $\Sigma $ must not evolve into a compact $K$ due to global hyperbolicity. Since $I$ is a trapped surface both the future null orthogonal congruence hypersurfaces must be compact (one can define this in terms of the quantum expansion, as we will see in the quantum expansion discussion), and by defining a timelike vector field on $(M, g)$, one can see that the corresponding continuous $1-1$ mapping defined by the boundary of the causal domain $J(I)$ would have to be compact, but this would mean that there would be a boundary of the image $\varphi (\Dot{J}^{+}(I))$ in $H$, since $H$ is a non-compact Cauchy surface \cite{Hawking:1973uf}. Due to this, there must exist an incomplete null geodesic.
\end{proof}

Under the quantum Bousso bound, we have the condition that the variation of the exterior entropy $\Delta S_{ext}(\sigma , \sigma ')$ is always less than or equal to one-fourth of the variation of the interior entropy. If the light-sheet was allowed to terminate when $I$ is a hyperentropic region, we would have a violation of the Bousso bound. Naturally, the quantum focusing condition can be integrated to get an entropy bound on the light-sheet. Due to this, it must be necessary to condition singularities in cases of hyperentropic regions so as to prevent a violation of the Bousso bound. The null energy condition was implied to ensure that the classical expansion is non-positive throughout $L$ when $\theta _{0}(\partial I)\leq 0$, so that the only cases where the light-sheet can terminate are at caustics or singularities.

\textbf{Remarks:} In this paper, we have put forward a quantum version of the theorem introduced in \cite{Bousso:2022cun}. We have discussed and proved the classical and quantum versions of the theorem. We have shown that the result by Bousso and Moghaddam puts forward geodesic incompleteness as a response of spacetime to hyperentropic regions, and discussed the generalized entropy as an analogous tool in showing the compactness and the equivalence of the light-sheet entropy bound to the entropy intersecting the surface. If the light-sheet closes off, the curves intersecting the surface would also intersect $L$, and by a homeomorphic mapping one could say that the timelike curves would also intersect $L$, which would violate the Bousso bound. Furthermore, the role of the quantum focusing conjecture in singularity theorems is an interesting one, where the role of the classical expansion is replaced by a stronger condition on the generalized entropy of the surface. It will be of significance to see how the exterior entropy affects the overall geometry in such singularity theorems, and how the quantum focusing condition governs scenarios like non-locality. 

\bibliography{apssamp}% Produces the bibliography via BibTeX.

%apsrev4-2.bst 2019-01-14 (MD) hand-edited version of apsrev4-1.bst
%Control: key (0)
%Control: author (8) initials jnrlst
%Control: editor formatted (1) identically to author
%Control: production of article title (0) allowed
%Control: page (0) single
%Control: year (1) truncated
%Control: production of eprint (0) enabled
\providecommand{\noopsort}[1]{}\providecommand{\singleletter}[1]{#1}%
\begin{thebibliography}{14}%
\makeatletter
\providecommand \@ifxundefined [1]{%
 \@ifx{#1\undefined}
}%
\providecommand \@ifnum [1]{%
 \ifnum #1\expandafter \@firstoftwo
 \else \expandafter \@secondoftwo
 \fi
}%
\providecommand \@ifx [1]{%
 \ifx #1\expandafter \@firstoftwo
 \else \expandafter \@secondoftwo
 \fi
}%
\providecommand \natexlab [1]{#1}%
\providecommand \enquote  [1]{``#1''}%
\providecommand \bibnamefont  [1]{#1}%
\providecommand \bibfnamefont [1]{#1}%
\providecommand \citenamefont [1]{#1}%
\providecommand \href@noop [0]{\@secondoftwo}%
\providecommand \href [0]{\begingroup \@sanitize@url \@href}%
\providecommand \@href[1]{\@@startlink{#1}\@@href}%
\providecommand \@@href[1]{\endgroup#1\@@endlink}%
\providecommand \@sanitize@url [0]{\catcode `\\12\catcode `\$12\catcode
  `\&12\catcode `\#12\catcode `\^12\catcode `\_12\catcode `\%12\relax}%
\providecommand \@@startlink[1]{}%
\providecommand \@@endlink[0]{}%
\providecommand \url  [0]{\begingroup\@sanitize@url \@url }%
\providecommand \@url [1]{\endgroup\@href {#1}{\urlprefix }}%
\providecommand \urlprefix  [0]{URL }%
\providecommand \Eprint [0]{\href }%
\providecommand \doibase [0]{https://doi.org/}%
\providecommand \selectlanguage [0]{\@gobble}%
\providecommand \bibinfo  [0]{\@secondoftwo}%
\providecommand \bibfield  [0]{\@secondoftwo}%
\providecommand \translation [1]{[#1]}%
\providecommand \BibitemOpen [0]{}%
\providecommand \bibitemStop [0]{}%
\providecommand \bibitemNoStop [0]{.\EOS\space}%
\providecommand \EOS [0]{\spacefactor3000\relax}%
\providecommand \BibitemShut  [1]{\csname bibitem#1\endcsname}%
\let\auto@bib@innerbib\@empty
%</preamble>
\bibitem [{\citenamefont {Bousso}\ and\ \citenamefont
  {Shahbazi-Moghaddam}(2022)}]{Bousso:2022cun}%
  \BibitemOpen
  \bibfield  {author} {\bibinfo {author} {\bibfnamefont {R.}~\bibnamefont
  {Bousso}}\ and\ \bibinfo {author} {\bibfnamefont {A.}~\bibnamefont
  {Shahbazi-Moghaddam}},\ }\bibfield  {title} {\bibinfo {title} {{Singularities
  from Entropy}},\ }\href {https://doi.org/10.1103/PhysRevLett.128.231301}
  {\bibfield  {journal} {\bibinfo  {journal} {Phys. Rev. Lett.}\ }\textbf
  {\bibinfo {volume} {128}},\ \bibinfo {pages} {231301} (\bibinfo {year}
  {2022})},\ \Eprint {https://arxiv.org/abs/2201.11132} {arXiv:2201.11132
  [hep-th]} \BibitemShut {NoStop}%
\bibitem [{\citenamefont {Bousso}(1999)}]{Bousso:1999xyz}%
  \BibitemOpen
  \bibfield  {author} {\bibinfo {author} {\bibfnamefont {R.}~\bibnamefont
  {Bousso}},\ }\bibfield  {title} {\bibinfo {title} {{A Covariant entropy
  conjecture}},\ }\href {https://doi.org/10.1088/1126-6708/1999/07/004}
  {\bibfield  {journal} {\bibinfo  {journal} {JHEP}\ }\textbf {\bibinfo
  {volume} {07}},\ \bibinfo {pages} {004}},\ \Eprint
  {https://arxiv.org/abs/hep-th/9905177} {arXiv:hep-th/9905177} \BibitemShut
  {NoStop}%
\bibitem [{\citenamefont {Flanagan}\ \emph {et~al.}(2000)\citenamefont
  {Flanagan}, \citenamefont {Marolf},\ and\ \citenamefont
  {Wald}}]{Flanagan:1999jp}%
  \BibitemOpen
  \bibfield  {author} {\bibinfo {author} {\bibfnamefont {E.~E.}\ \bibnamefont
  {Flanagan}}, \bibinfo {author} {\bibfnamefont {D.}~\bibnamefont {Marolf}},\
  and\ \bibinfo {author} {\bibfnamefont {R.~M.}\ \bibnamefont {Wald}},\
  }\bibfield  {title} {\bibinfo {title} {{Proof of classical versions of the
  Bousso entropy bound and of the generalized second law}},\ }\href
  {https://doi.org/10.1103/PhysRevD.62.084035} {\bibfield  {journal} {\bibinfo
  {journal} {Phys. Rev. D}\ }\textbf {\bibinfo {volume} {62}},\ \bibinfo
  {pages} {084035} (\bibinfo {year} {2000})},\ \Eprint
  {https://arxiv.org/abs/hep-th/9908070} {arXiv:hep-th/9908070} \BibitemShut
  {NoStop}%
\bibitem [{\citenamefont {Hawking}\ and\ \citenamefont
  {Ellis}(2011)}]{Hawking:1973uf}%
  \BibitemOpen
  \bibfield  {author} {\bibinfo {author} {\bibfnamefont {S.~W.}\ \bibnamefont
  {Hawking}}\ and\ \bibinfo {author} {\bibfnamefont {G.~F.~R.}\ \bibnamefont
  {Ellis}},\ }\href {https://doi.org/10.1017/CBO9780511524646} {\emph {\bibinfo
  {title} {{The Large Scale Structure of Space-Time}}}},\ Cambridge Monographs
  on Mathematical Physics\ (\bibinfo  {publisher} {Cambridge University
  Press},\ \bibinfo {year} {2011})\BibitemShut {NoStop}%
\bibitem [{\citenamefont {Akers}\ \emph {et~al.}(2018)\citenamefont {Akers},
  \citenamefont {Bousso}, \citenamefont {Halpern},\ and\ \citenamefont
  {Remmen}}]{Akers:2017nrr}%
  \BibitemOpen
  \bibfield  {author} {\bibinfo {author} {\bibfnamefont {C.}~\bibnamefont
  {Akers}}, \bibinfo {author} {\bibfnamefont {R.}~\bibnamefont {Bousso}},
  \bibinfo {author} {\bibfnamefont {I.~F.}\ \bibnamefont {Halpern}},\ and\
  \bibinfo {author} {\bibfnamefont {G.~N.}\ \bibnamefont {Remmen}},\ }\bibfield
   {title} {\bibinfo {title} {{Boundary of the future of a surface}},\ }\href
  {https://doi.org/10.1103/PhysRevD.97.024018} {\bibfield  {journal} {\bibinfo
  {journal} {Phys. Rev. D}\ }\textbf {\bibinfo {volume} {97}},\ \bibinfo
  {pages} {024018} (\bibinfo {year} {2018})},\ \Eprint
  {https://arxiv.org/abs/1711.06689} {arXiv:1711.06689 [hep-th]} \BibitemShut
  {NoStop}%
\bibitem [{\citenamefont {Bekenstein}(1974)}]{Bekenstein:1974ax}%
  \BibitemOpen
  \bibfield  {author} {\bibinfo {author} {\bibfnamefont {J.~D.}\ \bibnamefont
  {Bekenstein}},\ }\bibfield  {title} {\bibinfo {title} {{Generalized second
  law of thermodynamics in black hole physics}},\ }\href
  {https://doi.org/10.1103/PhysRevD.9.3292} {\bibfield  {journal} {\bibinfo
  {journal} {Phys. Rev. D}\ }\textbf {\bibinfo {volume} {9}},\ \bibinfo {pages}
  {3292} (\bibinfo {year} {1974})}\BibitemShut {NoStop}%
\bibitem [{\citenamefont {Wall}(2013)}]{Wall:2010jtc}%
  \BibitemOpen
  \bibfield  {author} {\bibinfo {author} {\bibfnamefont {A.~C.}\ \bibnamefont
  {Wall}},\ }\bibfield  {title} {\bibinfo {title} {{The Generalized Second Law
  implies a Quantum Singularity Theorem}},\ }\href
  {https://doi.org/10.1088/0264-9381/30/19/199501} {\bibfield  {journal}
  {\bibinfo  {journal} {Class. Quant. Grav.}\ }\textbf {\bibinfo {volume}
  {30}},\ \bibinfo {pages} {165003} (\bibinfo {year} {2013})},\ \bibinfo {note}
  {[Erratum: Class.Quant.Grav. 30, 199501 (2013)]},\ \Eprint
  {https://arxiv.org/abs/1010.5513} {arXiv:1010.5513 [gr-qc]} \BibitemShut
  {NoStop}%
\bibitem [{\citenamefont {Bousso}\ \emph {et~al.}(2016)\citenamefont {Bousso},
  \citenamefont {Fisher}, \citenamefont {Leichenauer},\ and\ \citenamefont
  {Wall}}]{Bousso:2015mna}%
  \BibitemOpen
  \bibfield  {author} {\bibinfo {author} {\bibfnamefont {R.}~\bibnamefont
  {Bousso}}, \bibinfo {author} {\bibfnamefont {Z.}~\bibnamefont {Fisher}},
  \bibinfo {author} {\bibfnamefont {S.}~\bibnamefont {Leichenauer}},\ and\
  \bibinfo {author} {\bibfnamefont {A.~C.}\ \bibnamefont {Wall}},\ }\bibfield
  {title} {\bibinfo {title} {{Quantum focusing conjecture}},\ }\href
  {https://doi.org/10.1103/PhysRevD.93.064044} {\bibfield  {journal} {\bibinfo
  {journal} {Phys. Rev. D}\ }\textbf {\bibinfo {volume} {93}},\ \bibinfo
  {pages} {064044} (\bibinfo {year} {2016})},\ \Eprint
  {https://arxiv.org/abs/1506.02669} {arXiv:1506.02669 [hep-th]} \BibitemShut
  {NoStop}%
\bibitem [{\citenamefont {Strominger}\ and\ \citenamefont
  {Thompson}(2004)}]{Strominger:2003br}%
  \BibitemOpen
  \bibfield  {author} {\bibinfo {author} {\bibfnamefont {A.}~\bibnamefont
  {Strominger}}\ and\ \bibinfo {author} {\bibfnamefont {D.~M.}\ \bibnamefont
  {Thompson}},\ }\bibfield  {title} {\bibinfo {title} {{A Quantum Bousso
  bound}},\ }\href {https://doi.org/10.1103/PhysRevD.70.044007} {\bibfield
  {journal} {\bibinfo  {journal} {Phys. Rev. D}\ }\textbf {\bibinfo {volume}
  {70}},\ \bibinfo {pages} {044007} (\bibinfo {year} {2004})},\ \Eprint
  {https://arxiv.org/abs/hep-th/0303067} {arXiv:hep-th/0303067} \BibitemShut
  {NoStop}%
\bibitem [{\citenamefont {Bousso}\ \emph {et~al.}(2014)\citenamefont {Bousso},
  \citenamefont {Casini}, \citenamefont {Fisher},\ and\ \citenamefont
  {Maldacena}}]{Bousso:2014sda}%
  \BibitemOpen
  \bibfield  {author} {\bibinfo {author} {\bibfnamefont {R.}~\bibnamefont
  {Bousso}}, \bibinfo {author} {\bibfnamefont {H.}~\bibnamefont {Casini}},
  \bibinfo {author} {\bibfnamefont {Z.}~\bibnamefont {Fisher}},\ and\ \bibinfo
  {author} {\bibfnamefont {J.}~\bibnamefont {Maldacena}},\ }\bibfield  {title}
  {\bibinfo {title} {{Proof of a Quantum Bousso Bound}},\ }\href
  {https://doi.org/10.1103/PhysRevD.90.044002} {\bibfield  {journal} {\bibinfo
  {journal} {Phys. Rev. D}\ }\textbf {\bibinfo {volume} {90}},\ \bibinfo
  {pages} {044002} (\bibinfo {year} {2014})},\ \Eprint
  {https://arxiv.org/abs/1404.5635} {arXiv:1404.5635 [hep-th]} \BibitemShut
  {NoStop}%
\bibitem [{\citenamefont {Bousso}\ and\ \citenamefont
  {Engelhardt}(2015{\natexlab{a}})}]{Bousso:2015mqa}%
  \BibitemOpen
  \bibfield  {author} {\bibinfo {author} {\bibfnamefont {R.}~\bibnamefont
  {Bousso}}\ and\ \bibinfo {author} {\bibfnamefont {N.}~\bibnamefont
  {Engelhardt}},\ }\bibfield  {title} {\bibinfo {title} {{New Area Law in
  General Relativity}},\ }\href
  {https://doi.org/10.1103/PhysRevLett.115.081301} {\bibfield  {journal}
  {\bibinfo  {journal} {Phys. Rev. Lett.}\ }\textbf {\bibinfo {volume} {115}},\
  \bibinfo {pages} {081301} (\bibinfo {year} {2015}{\natexlab{a}})},\ \Eprint
  {https://arxiv.org/abs/1504.07627} {arXiv:1504.07627 [hep-th]} \BibitemShut
  {NoStop}%
\bibitem [{\citenamefont {Bousso}\ and\ \citenamefont
  {Engelhardt}(2015{\natexlab{b}})}]{Bousso:2015qqa}%
  \BibitemOpen
  \bibfield  {author} {\bibinfo {author} {\bibfnamefont {R.}~\bibnamefont
  {Bousso}}\ and\ \bibinfo {author} {\bibfnamefont {N.}~\bibnamefont
  {Engelhardt}},\ }\bibfield  {title} {\bibinfo {title} {{Proof of a New Area
  Law in General Relativity}},\ }\href
  {https://doi.org/10.1103/PhysRevD.92.044031} {\bibfield  {journal} {\bibinfo
  {journal} {Phys. Rev. D}\ }\textbf {\bibinfo {volume} {92}},\ \bibinfo
  {pages} {044031} (\bibinfo {year} {2015}{\natexlab{b}})},\ \Eprint
  {https://arxiv.org/abs/1504.07660} {arXiv:1504.07660 [gr-qc]} \BibitemShut
  {NoStop}%
\bibitem [{\citenamefont {Bousso}\ and\ \citenamefont
  {Engelhardt}(2016)}]{Bousso:2015eda}%
  \BibitemOpen
  \bibfield  {author} {\bibinfo {author} {\bibfnamefont {R.}~\bibnamefont
  {Bousso}}\ and\ \bibinfo {author} {\bibfnamefont {N.}~\bibnamefont
  {Engelhardt}},\ }\bibfield  {title} {\bibinfo {title} {{Generalized Second
  Law for Cosmology}},\ }\href {https://doi.org/10.1103/PhysRevD.93.024025}
  {\bibfield  {journal} {\bibinfo  {journal} {Phys. Rev. D}\ }\textbf {\bibinfo
  {volume} {93}},\ \bibinfo {pages} {024025} (\bibinfo {year} {2016})},\
  \Eprint {https://arxiv.org/abs/1510.02099} {arXiv:1510.02099 [hep-th]}
  \BibitemShut {NoStop}%
\bibitem [{\citenamefont {Carroll}\ and\ \citenamefont
  {Chatwin-Davies}(2018)}]{Carroll:2017kjo}%
  \BibitemOpen
  \bibfield  {author} {\bibinfo {author} {\bibfnamefont {S.~M.}\ \bibnamefont
  {Carroll}}\ and\ \bibinfo {author} {\bibfnamefont {A.}~\bibnamefont
  {Chatwin-Davies}},\ }\bibfield  {title} {\bibinfo {title} {{Cosmic
  Equilibration: A Holographic No-Hair Theorem from the Generalized Second
  Law}},\ }\href {https://doi.org/10.1103/PhysRevD.97.046012} {\bibfield
  {journal} {\bibinfo  {journal} {Phys. Rev. D}\ }\textbf {\bibinfo {volume}
  {97}},\ \bibinfo {pages} {046012} (\bibinfo {year} {2018})},\ \Eprint
  {https://arxiv.org/abs/1703.09241} {arXiv:1703.09241 [hep-th]} \BibitemShut
  {NoStop}%
\end{thebibliography}%

\end{document}